\def\UseBibLatex{1}
\def\input@path{{styles/}}
\providecommand{\BibLatexMode}[1]{}
\providecommand{\BibTexMode}[1]{}
\renewcommand{\BibLatexMode}[1]{#1}
\renewcommand{\BibTexMode}[1]{}
  \renewcommand{\BibLatexMode}[1]{}
  \renewcommand{\BibTexMode}[1]{#1}
\theoremstyle{plain}%
\newtheorem{theorem}{Theorem}[section]
\newtheorem{lemma}[theorem]{Lemma}
\theoremstyle{plain}%
\newtheorem*{remark:unnumbered}[theorem]{Remark}%
\newtheorem{definition}[theorem]{Definition}
\newtheorem{problem}[theorem]{Problem}
\theoremstyle{nonumberplain}%
\newtheorem{proof}{Proof:}%
\providecommand{\emphind}[1]{}%
\renewcommand{\emphind}[1]{\emph{#1}\index{#1}}
\definecolor{blue25emph}{rgb}{0, 0, 11}
\providecommand{\emphic}[2]{}
\renewcommand{\emphic}[2]{\textcolor{blue25emph}{%
      \textbf{\emph{#1}}}\index{#2}}
\providecommand{\emphi}[1]{}%
\renewcommand{\emphi}[1]{\emphic{#1}{#1}}
\definecolor{almostblack}{rgb}{0, 0, 0.3}
\providecommand{\emphw}[1]{}%
\renewcommand{\emphw}[1]{{\textcolor{almostblack}{\emph{#1}}}}%
\providecommand{\emphOnly}[1]{}%
\renewcommand{\emphOnly}[1]{\emph{\textcolor{blue25emph}{\textbf{#1}}}}
\newcommand{\myqedsymbol}{\rule{2mm}{2mm}}
\newcommand{\HLink}[2]{\hyperref[#2]{#1~\ref*{#2}}}
\newcommand{\HLinkSuffix}[3]{\hyperref[#2]{#1\ref*{#2}{#3}}}
\newcommand{\figlab}[1]{\label{fig:#1}}
\newcommand{\figref}[1]{\HLink{Figure}{fig:#1}}
\newcommand{\lemlab}[1]{\label{lemma:#1}}
\newcommand{\lemref}[1]{\HLink{Lemma}{lemma:#1}}%
\providecommand{\eqlab}[1]{}%
\renewcommand{\eqlab}[1]{\label{equation:#1}}
\providecommand{\remove}[1]{}%
\newcommand{\Set}[2]{\left\{ #1 \;\middle\vert\; #2 \right\}}
 \newcommand{\brc}[1]{\mleft\{ {#1} \mright\}}
\newlist{compactenumA}{enumerate}{5}%
\setlist[compactenumA]{topsep=0pt,itemsep=-1ex,partopsep=1ex,parsep=1ex,%
   label=(\Alph*)}%
\newlist{compactenuma}{enumerate}{5}%
\setlist[compactenuma]{topsep=0pt,itemsep=-1ex,partopsep=1ex,parsep=1ex,%
   label=(\alph*)}%
\newlist{compactenumI}{enumerate}{5}%
\setlist[compactenumI]{topsep=0pt,itemsep=-1ex,partopsep=1ex,parsep=1ex,%
   label=(\Roman*)}%
\newlist{compactenumi}{enumerate}{5}%
\setlist[compactenumi]{topsep=0pt,itemsep=-1ex,partopsep=1ex,parsep=1ex,%
   label=(\roman*)}%
\newlist{compactitem}{itemize}{5}%
\setlist[compactitem]{topsep=0pt,itemsep=-1ex,partopsep=1ex,parsep=1ex,%
   label=\ensuremath{\bullet}}%
\numberwithin{figure}{section}%
\numberwithin{table}{section}%
\numberwithin{equation}{section}%
\def\A{{\cal A}}
\def\B{{\cal B}}
\def\C{{\cal C}}
\def\I{{\cal I}}
\def\SS{{\cal S}}
\def\eps{{\varepsilon}}
\newcommand{\Reals}{{\rm I\!\hspace{-0.025em} R}}
\def\CH{\mbox{CH}}
\providecommand{\remove}[1]{[[[ {\bf Removed text:} \\ {\small #1} \\ ]]]}
\newenvironment{dfn}{{\vspace*{1ex} \noindent \bf Definition }}{\vspace*{1ex}}
\newcommand{\seg}[1]{\overline{#1}}
\newcommand{\prob}[1]{{\text{\sc{#1}}}}
\newcommand{\tsum}{\prob{3sum}}
\newcommand{\tsump}{\prob{3sum'}}
   \newcommand{\ed}{\prob{EqDist}}
   \newcommand{\scp}{\prob{Seg}\-\prob{ContPnt}}
   \newcommand{\pct}{\prob{PolyCont}}
   \newcommand{\cpcr}{\prob{ConvPolyContRot}}
   \newcommand{\cpctr}{\prob{ConvPolyContRigid}\-\prob{Mot}}
   \newcommand{\osshdt}{\prob{1SideSeg}\-\prob{HausDist}}
   \newcommand{\shdt}{\prob{SegHausDist}}
   \newcommand{\cpct}{\prob{ConvPolyContTrans}}%
\newcommand{\etal}{\textit{et~al.}\xspace}
\newcommand{\reduction}[3]{#1 \lll_{#2} #3}
\newcommand{\eqreduction}[3]{#1 \; \equiv_{#2} \; #3}
\newcommand{\tshard}{\tsum-hard}
\newcommand{\seclab}[1]{\label{sec:#1}}
\newcommand{\secref}[1]{\HLink{Section}{sec:#1}}
\title{%
   Polygon Containment and Translational Min-Hausdorff-Distance between Segment Sets are \tsum-Hard%
   \thanks{%
      Work on this paper by the first author has been supported by the U.S. Army Research Office under MURI Grant DAAH04-96-1-0013.  Work by the second author has been supported by a grant from the U.S.-Israeli Binational Science Foundation.  A 2-page abstract of this paper appeared in SODA~'99~\cite{bh-pctmh-99} and the journal version appeared in \cite{bh-pctmh-01}. %
   } %
}%
\author{ Gill Barequet\thanks{ Center for Geometric Computing, Dept.\ of Computer Science, Johns Hopkins University, Baltimore, MD 21218.  E-mail: {\tt barequet\symbol{'100}cs.jhu.edu} (this author is currently affiliated with the Faculty of Computer Science, The Technion---IIT, Haifa 32000, Israel) } \and Sariel Har-Peled\thanks{ School of Mathematical Sciences, Tel Aviv University, Tel Aviv 69978, Israel.  E-mail: {\tt sariel\symbol{'100}math.tau.ac.il} } } \date{}
\begin{document}

\date{September 4, 1998}%

\maketitle

\begin{abstract}
   The \tsum\ problem represents a class of problems conjectured
   to require $\Omega (n^2)$ time to solve, where $n$ is the size of the
   input.
   Given two polygons $P$ and $Q$ in the plane, we show that
   some variants of the decision problem, whether there exists a
   transformation of $P$ that makes it contained in $Q$, are \tshard.
   In the first variant $P$ and $Q$ are any simple polygons and the
   allowed transformations are translations only;
   in the second and third variants both polygons are convex and we allow
   either rotations only or any rigid motion.
   We also show that finding the translation in the plane that minimizes
   the Hausdorff distance between two segment sets is \tshard.
\end{abstract}

\section{Introduction}

In this paper we show that a few polygon containment and Hausdorff distance
problems are in the class of \tsum-hard problems, introduced by
Gajentaan and Overmars~\cite{go-copcg-95}.
The \tsum\ problem is to decide whether there exists a triple $a,b,c$
in a set of $n$ integers such that $a+b+c=0$. Currently, the fastest known
algorithms for this problem require $\Theta (n^2)$ time.

A problem is called \tsum-hard (or $n^2$-hard in the original notation
of~\cite{go-copcg-95}) if any instance of the \tsum\ problem can be
reduced to some instance (with a comparable size) of the other problem
in $o(n^2)$ time, where $n$ is the size of the input.
Thus, a subquadratic algorithm for any such problem will imply a
subquadratic algorithm for \tsum, which is widely conjectured
not to exist~\cite{go-copcg-95}.

In a similar fashion Barrera~\cite{b-foash-96} presents a class of problems
whose claimed complexity is $\Omega(n^2 \log n)$ (where $n$ is the size of
the problem), since the best known algorithm for solving its base problem
(sorting $X+Y$, where each of $X$ and $Y$ is a set of $n$ integers) runs in
$O(n^2 \log n)$ time.

The paper is organized as follows. \secref{prelim} introduces the required terminology and notation.  Then we gradually establish several \tsum-hardness results.  We first prove in \secref{interval:containment} that deciding whether there exists a translation of a set of intervals on the real line, that makes it contained in another set of intervals, is \tshard.  Using this result we then prove in \secref{polygon:containment} that deciding whether there exists a translation of a simple polygon, that makes it contained in another simple polygon in the plane, is also \tshard.  We also show that the latter problem remains \tshard\ when the polygons are convex and either rotations or rigid motions are allowed.  Finally, we show in \secref{hausdorff:distance} that computing the translation in the plane that minimizes the Hausdorff distance between two sets of segments is \tshard. The last reduction uses a solution-separation technique.  We conclude in \secref{concl} with a few open problems.

\section{\tsum-Hard Problems}
\seclab{prelim}

In this section we closely follow~\cite{go-copcg-95} and introduce more
formally the notion of a \tsum-hard
problem and the machinery used to prove such a hardness result.

\begin{definition}
    Given two problems \prob{pr1} and \prob{pr2} we say that
    \prob{pr1} is $f(n)$-solvable using \prob{pr2} if every
    instance of \prob{pr1} of size $n$ can be solved by using a
    constant number of instances of \prob{pr2} (of size $O (n)$)
    and $O (f(n))$ additional time. We denote this by
    $$
       \reduction{\prob{pr1}}{f(n)}{\prob{pr2}}.
    $$

    If $\reduction{\prob{pr1}}{f(n)}{\prob{pr2}}$ and
    $\reduction{\prob{pr2}}{f(n)}{\prob{pr1}}$ we say that \prob{pr1} and
    \prob{pr2} are $f(n)$-equivalent and denote this by
    $$
       \eqreduction{\prob{pr1}}{f(n)}{\prob{pr2}}.
    $$
\end{definition}

When \prob{pr1} is $f(n)$-solvable using \prob{pr2}, this means that
\prob{pr1} is ``easier'' than \prob{pr2} (provided that solving \prob{pr1}
requires $\Omega (f (n))$ time). That is, when $f(n)$ is
sufficiently small, lower bounds for \prob{pr1} carry over to
\prob{pr2} and upper bounds for \prob{pr2} hold for \prob{pr1}.

The following two problems are defined as
representatives of their hardness class:

\begin{problem}[\tsum{}]
    Given a set $S$ of $n$ integers, are there $a,b,c \in S$ with $a+b + c = 0$?
\end{problem}

\begin{problem}[\tsump{}]
    Given three sets of integers $A, B,$ and $C$, each of size $n$, are there $a \in A$, $b \in B$, and $c \in C$ with $a + b = c$?
\end{problem}

\begin{theorem}[Gajentaan and Overmars~\cite{go-copcg-95}]
    $\eqreduction{\tsump}{n}{\tsum}$.
\end{theorem}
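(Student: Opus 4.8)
The plan is to prove the two one‑directional reductions $\reduction{\tsum}{n}{\tsump}$ and $\reduction{\tsump}{n}{\tsum}$ separately; together they give the asserted $n$‑equivalence. The first direction is essentially free. Given an instance $S$ of \tsum, I would build the \tsump\ instance with $A = B = S$ and $C = \brc{-s \mid s \in S}$, which costs $O(n)$ time. A triple $a\in A$, $b\in B$, $c\in C$ with $a+b=c$ is the same thing as $a,b\in S$ together with $-c\in S$ satisfying $a+b+(-c)=0$; hence the constructed \tsump\ instance is a ``yes'' instance exactly when $S$ is a ``yes'' instance of \tsum. (If one insists that the three witnesses be distinct, a standard perturbation removes the only degenerate case.)

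For the converse $\reduction{\tsump}{n}{\tsum}$, which is the substantive half, I would merge $A$, $B$, $C$ into a single set by a scaling‑and‑tagging trick that forces a zero‑sum triple to pick exactly one element from each block. Let $T$ be one plus the maximum absolute value of all integers occurring in $A\cup B\cup C$, and set
$$
   S \;=\; \brc{10Ta + 1 \mid a \in A} \;\cup\; \brc{10Tb + 2 \mid b \in B}
       \;\cup\; \brc{-10Tc - 3 \mid c \in C},
$$
a set of at most $3n$ integers computable in $O(n)$ time. The additive tags $1,2,-3$ are chosen so that the only way three of them (with repetition) sum to $0$ is $1+2+(-3)$, which forces a solution to use one element from each of the three blocks; and the multiplier $10T$ is large enough that the ``high‑order'' parts and the tag parts of a triple sum cannot cancel each other, so the total vanishes iff the tag sum is $0$ \emph{and} $a+b-c=0$, i.e.\ iff $a+b=c$.

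Making this precise is the one place needing a short argument, and it is the only real obstacle — pure bookkeeping rather than anything deep. Writing an arbitrary triple from $S$ as $10T\alpha+p$, $10T\beta+q$, $10T\gamma+r$ with $p,q,r\in\brc{1,2,-3}$ and $\cardin{\alpha},\cardin{\beta},\cardin{\gamma} < T$, its sum is $10T(\alpha+\beta+\gamma)+(p+q+r)$ with $\cardin{p+q+r}\le 9 < 10 \le 10T$, so the sum is $0$ only if $\alpha+\beta+\gamma=0$ and $p+q+r=0$ simultaneously; the second equality pins down one element per block, and the first then reads $a+b-c=0$. Conversely, if $a+b=c$ in the original instance, the corresponding elements of $S$ sum to $10T\cdot 0 + (1+2-3)=0$. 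Hence $S$ is a ``yes'' instance of \tsum\ iff $(A,B,C)$ is a ``yes'' instance of \tsump, and combining both directions yields $\eqreduction{\tsump}{n}{\tsum}$; this reproduces the argument of Gajentaan and Overmars.
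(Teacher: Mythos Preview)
The paper does not actually prove this theorem; it merely states it with a citation to Gajentaan and Overmars~\cite{go-copcg-95}, so there is no in-paper proof to compare against. Your argument is correct and is essentially the standard Gajentaan--Overmars reduction: the direction $\reduction{\tsum}{n}{\tsump}$ via $A=B=S$, $C=-S$ is exactly the usual trick, and your tagging construction for $\reduction{\tsump}{n}{\tsum}$ (multiply by a large modulus and add residues $1,2,-3$ so that a zero-sum triple is forced to draw one element from each block) is a minor cosmetic variant of the original shifting argument, with the same one-line verification that the high-order and tag parts cannot interfere.
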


\begin{definition}
    A problem \prob{pr} is \tsum-hard if
    $\reduction{\tsum}{f(n)}{\prob{pr}}$, where $f(n) = o(n^2)$.
\end{definition}

In the next three sections we show that several geometric problems in
one and two dimensions are \tshard.

\section{Interval Containment}
\seclab{interval:containment}

In this section we show that the following two problems are \tshard:

\begin{problem}[\ed{} (Equal Distance)]
   Given two sets $P$ and $Q$ of $n$ and $m = O (n)$ real
   numbers,\footnote{Naturally, the same problem but with {\em rational\/}
      numbers is also \tshard.
   } respectively, is there a pair $p_1, p_2 \in P$ and a pair
   $q_1, q_2 \in Q$ such that $p_1 - p_2 = q_1 - q_2$?
\end{problem}

\begin{problem}[\scp\ (Segments Containing Points)]
   Given a set $P$ of $n$ real numbers and a set $Q$ of $m = O (n)$
   pairwise-disjoint intervals of real numbers, is there a
   real number (translation) $v$ such that $P + v \subseteq Q$?
\end{problem}

In the next section we interpret $P$ and $Q$ as sets of points and
segments, respectively (or, more generally, two segment sets), on the real
line. We therefore call the second problem ``Segments Containing Points''.

\begin{theorem}
    $\reduction{\tsump}{n}{\ed}$ and $\reduction{\tsump}{n}{\scp}$.
\end{theorem}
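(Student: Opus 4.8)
The plan is to reduce \tsump\ to each of the two problems, using essentially the same encoding of the three integer sets $A$, $B$, $C$ as point/interval sets on the line. Recall an instance of \tsump\ asks whether there exist $a\in A$, $b\in B$, $c\in C$ with $a+b=c$, equivalently $c-b=a$, equivalently $c-b-a=0$. The idea is to place the elements of two of the sets (say $A$ and $C$) into one point set, suitably shifted so that differences within that set can only be realized as ``one element from the $C$-part minus one element from the $A$-part'', and to put (a shifted copy of) $B$ into the other set, so that matching a difference on one side against a difference on the other side is exactly the equation $c-b=a$.

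For \ed: First I would choose a large separating constant $M$ (bigger than twice the absolute value of any input integer, say $M = 10\max_{s}\cardin{s}+1$), and set $P = \Set{a}{a\in A}\cup\Set{c+M}{c\in C}$ and $Q=\Set{0}{}\cup\Set{b+M}{b\in B}$, so $\cardin{P}\le 2n$ and $\cardin{Q}\le n+1$, both $O(n)$, and the construction takes $O(n)$ time. Now a pair $p_1,p_2\in P$ and $q_1,q_2\in Q$ with $p_1-p_2=q_1-q_2$: by the size of $M$, a difference of two elements of $Q$ is either $0$ (same element), or $\pm(b+M)$ for some $b\in B$ — and a nonzero such value has absolute value in the ``high'' range near $M$; similarly a difference $p_1-p_2$ lands in the high range only when one of $p_1,p_2$ comes from the $C+M$ part and the other from the $A$ part, giving $p_1-p_2=(c+M)-a$. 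Matching $(c+M)-a=(b+M)-0$ gives $c-a=b$, i.e.\ $a+b=c$. One must also rule out the degenerate matches (both differences zero, or both differences of the form ``within the $A$-block'' or ``within the $C+M$-block''): the zero case is vacuous, and a within-block difference on the $P$ side has absolute value at most $\max\cardin{s}<M/2$ while every nonzero $Q$-difference has absolute value $>M/2$, so they cannot coincide unless both are zero. Hence the \ed\ instance is a yes-instance iff the \tsump\ instance is, and since the overhead is $O(n)=o(n^2)$ we get $\reduction{\tsump}{n}{\ed}$.

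For \scp: here I want a single translation $v$ with $P+v\subseteq Q$ to encode the equation. The trick is to make $P$ a pair of points whose gap equals a chosen $a\in A$ — but since $a$ ranges over a set we instead reverse roles: let $P$ be a union of $n$ points, one at position (a shifted) $-a$ for each $a\in A$, together with one fixed ``anchor'' point, spaced so far apart (again via the constant $M$) that the only way all of $P+v$ can fall into $Q$ is for the anchor to land in a designated anchor-interval and each $-a+v$ to land in an interval encoding some $c-b$. Concretely, put the anchor of $P$ at $0$ and the $A$-points at positions $2M - a$ for $a\in A$; let $Q$ consist of one tiny interval $[-\delta,\delta]$ around $0$ for the anchor, and tiny intervals $\brc{[2M+(c-b)-\delta,\ 2M+(c-b)+\delta] : b\in B,\ c\in C}$ — but that is $O(n^2)$ intervals, which is too many. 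To keep $Q$ of size $O(n)$ I instead offload one of the sums into $P$: let $P = \Set{b}{b\in B}\cup\Set{-M}{}$ (a block of $B$-points plus one anchor) and $Q = \Set{[c-\delta, c+\delta]}{c\in C}\cup \Set{[-M - a - \delta,\ -M-a+\delta]}{a\in A}$, with $\delta$ smaller than the minimum gap between distinct relevant integers and $M$ large, so the two groups of intervals are far apart and pairwise disjoint. Then $P+v\subseteq Q$ forces the anchor $-M+v$ into some interval $[-M-a-\delta,-M-a+\delta]$, i.e.\ $v\approx -a$, and simultaneously every $b+v$ into some $[c-\delta,c+\delta]$; with only one anchor and the scales separated, $v$ is pinned to within $\delta$ of $-a$ for a unique $a$, and then some $b+v\in[c-\delta,c+\delta]$ forces $b-a=c$. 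Choosing $\delta$ small enough that $\cardin{(b-a)-c}\ge 2\delta$ whenever $b-a\ne c$ makes this exact. Again $\cardin{P},\cardin{Q}=O(n)$ and the reduction runs in $O(n)$ time, giving $\reduction{\tsump}{n}{\scp}$.

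The main obstacle is getting the separation constants right so that \emph{spurious} solutions are impossible: in the \ed\ reduction one must verify no cross-block coincidence of differences other than the intended one (handled by the ``$<M/2$ vs.\ $>M/2$'' dichotomy above and by noting all relevant integers are $O(\max\cardin{s})$), and in the \scp\ reduction one must both keep the interval set of size $O(n)$ — forcing the asymmetric ``anchor $+$ one block in $P$, two interval-groups in $Q$'' design rather than the naive product construction — and ensure the anchor interval genuinely pins $v$ so that the $B$-points cannot ``cheat'' by landing in $C$-intervals under a translation not of the form $-a$. Both are routine once the numeric gaps are chosen, e.g.\ $M$ larger than $4\max_s\cardin{s}$ and $\delta$ smaller than $1/2$ (the inputs being integers), so the differences can be read off unambiguously; I would write the two reductions in parallel since they share this encoding, and conclude by invoking $\eqreduction{\tsump}{n}{\tsum}$ to restate everything in terms of \tsum-hardness.
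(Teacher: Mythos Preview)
Both of your reductions contain genuine gaps.

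\medskip

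\textbf{The \ed\ reduction.} Your claim that ``every nonzero $Q$-difference has absolute value $>M/2$'' is false: with $Q=\{0\}\cup\{b+M:b\in B\}$, the difference of two elements of the $B$-block is $(b_1+M)-(b_2+M)=b_1-b_2$, which is small. So a within-block difference $a_1-a_2$ in $P$ can equal a within-block difference $b_1-b_2$ in $Q$, producing a positive \ed\ instance from a negative \tsump\ instance. (Concretely, take $A=\{1,3\}$, $B=\{5,7\}$, $C=\{100\}$: no \tsump\ solution, but $3-1=(7+M)-(5+M)$.) The paper avoids this by a different encoding: it places $C$ into $P$ as $n$ \emph{pairs} whose intra-pair gap is $3-c_i\in(2,3)$ and whose inter-pair gaps are huge, and puts $A\cup(3-B)$ into $Q$, so the only $Q$-difference in the range $(2,3)$ is $(3-b)-a$, forcing the intended equation.

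\medskip

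\textbf{The \scp\ reduction.} Your construction takes $P=B\cup\{-M\}$, so the containment $P+v\subseteq Q$ requires \emph{every} $b\in B$, not just one, to land in a $C$-interval. With the anchor pinning $v\approx -a$ for a single $a\in A$, your instance is a yes-instance iff there exists $a\in A$ such that $b-a\in C$ for \emph{all} $b\in B$. That is much stronger than \tsump, so the ``only if'' direction of the equivalence fails. (There is also a sign slip: you obtain $b-a=c$ rather than $a+b=c$.) The paper's reduction is structurally different: it reuses the specific \ed\ instance above, where $P$ already consists of $n$ well-separated pairs, and then adds two long ``catch-all'' intervals to $Q$. These intervals are sized so that together they can absorb any $n-1$ of the $n$ pairs of $P$ but never all $n$; hence any valid translation must send at least one full pair of $P$ onto two original points of $Q$, which recreates exactly one \ed\ match and hence one \tsump\ triple. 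Your anchor idea cannot be repaired simply by using a single $b$-point, because then the $b$ is fixed in the construction; the catch-all-interval trick is what lets the reduction quantify existentially over all three sets simultaneously while keeping $|Q|=O(n)$.
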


\begin{proof}
    Let $(A,B,C)$ be an instance of \tsump, where $|A|=|B|=|C| = n$. Assume without loss of generality that $A \cup B \cup C \subseteq (0,1)$ (this may be achieved by performing affine transformations).  We create a set $P$ of $2n$ numbers by mapping each number $c_i \in C$ into a pair of numbers with difference $3 - c_i$.  In particular,
   \begin{equation*}
      P = \Set{ 100i, 100i + 3 - c_i }{c_i \in C, i = 1,\ldots, n}.
   \end{equation*}
   We also define the following set of numbers:
   \begin{equation*}
       Q = A \cup \Set{ 3-b }{b \in B }.
   \end{equation*}
   We show that $(P,Q)$ is a corresponding instance of the \ed\ problem.

   First assume that there exists a solution to the \tsump\ instance,
   that is, a triple $a \in A$, $b \in B$, and $c_i \in C$ such that
   $a + b = c_i$. Then there exists a solution to the \ed\ instance:
   $(100 i + 3 - c_i) - (100 i) = (3 - b) - a$.
   Now assume that there exists a solution to the \ed\ instance,
   that is, $p_1, p_2 \in P$ and $q_1, q_2 \in Q$ such that
   $p_1 - p_2 = q_1 - q_2$. Obviously, $p_1$ and $p_2$ must correspond
   to the same $c_i \in C$, otherwise the difference between them would
   be at least 97, while the difference between $q_1$ and $q_2$ is at most 3.
   Moreover, $|p_1 - p_2| > 1$, thus $q_1$ and $q_2$ cannot correspond to
   a pair of points of $A$ or a pair of points of $B$, in which case
   $|q_1 - q_2| < 1$.
   Thus, there are numbers $100 i, 100 i + 3 - c_i \in P$ (where $c_i \in C$)
   and $a, 3 - b \in Q$ (where $a \in A, b \in B$), such that
   $(100 i - 3 - c_i) - (100 i) = (3 - b) - a$. This implies
   $a + b = c_i$, a solution to the \tsump\ instance.
   Thus we have $\reduction{\tsump}{n}{\ed}$.

   It is easy to extend the \ed\ instance to an \scp\ instance by adding
   two new intervals to $Q$:
   $$
      Q' = [-100 (n-1), -94] \cup Q \cup [100, 100 (n-1) + 6].
   $$
   (Here, every original point $q \in Q$ is interpreted as the interval
   $[q, q] \in Q'$.)
   The length and location of the two additional intervals are chosen so
   that every interval can contain a translation of only $(n-1)$ (but not
   $n$) pairs of $P$, and that even the union of the two new intervals
   cannot contain any translation of $P$.

   Recall that a triple of points $a \in A, b \in B, c_i \in C$,
   for which $a + b = c_i$, is reflected by two members ($q_1$ and $q_2$)
   in the intersection $(P + a - 100 i) \cap Q$, and vice versa.
   First assume that there is a solution to the \scp\ instance. We show
   a solution to the \ed\ instance, which implies a solution to the
   corresponding \tsump\ instance.
   Each new interval is of length $100 n - 194$, so a single interval
   cannot cover a translation of $P$, whose extreme numbers are within
   distance at least $100 n - 98$. The gap between the two new intervals
   is of length 194, so even their union cannot cover any translation
   of $P$. Hence, in a valid translation,
   at least one of the original numbers of $Q$ coincides
   with a translated number of $P$. The latter number corresponds to
   some $c_i \in C$. Due to the definition of $Q'$, the translation of the
   other number of $P$ that corresponds to $c_i$ must also coincide with a
   number of the original set $Q$, establishing the solution to the \ed\
   instance.
   Conversely, assume that there exists a solution to the \tsump\
   instance, which implies a solution to the corresponding \ed\ instance.
   We show that the solution to the latter instance is also a solution to
   the \scp\ instance: a translation of the pair $p_1, p_2 \in P$ that
   corresponds to $c_i$ is covered by $q_1 = a, q_2 = 3 - b \in Q$, and
   the translations of all the other points of $P$ are easily seen
   to be covered by the two additional intervals.
   Thus we have $\reduction{\tsump}{n}{\scp}$.
\end{proof}

Note that we have shown that
\begin{equation*}
    \reduction{\tsump}{n}{\scp}
\end{equation*}
but not that $\reduction{\ed}{n}{\scp}$. This is because we are able to reduce to \scp\ instances only \ed\ instances created by reductions from \tsump\ instances.

\section{Polygon Containment}
\seclab{polygon:containment}

In this section we show that several polygon containment problems are \tshard.
The symbol $o$ is used hereafter to denote the origin of coordinates.

\begin{problem}[\cpct]
    Given two convex polygons $P$ and $Q$ in the plane with $n$ and $m = O (n)$ edges, respectively, is there a translation of $P$ that makes it contained in $Q$?
\end{problem}

Obviously, every instance of the \cpct\ problem can be solved in linear time.

\begin{problem}[\pct\ (Polygon Containment under Translation)]
    Given two simple polygons $P$ and $Q$ in the plane with $n$ and $m = O (n)$ edges, respectively, is there a translation of $P$ that makes it contained in $Q$?
\end{problem}

\begin{problem}[\cpcr{}]
    Given two
   convex polygons $P$ and $Q$ in the plane with $n$ and $m = O (n)$ edges,
   respectively, is there a rotation of $P$ around $o$ that makes it
   contained in $Q$?
\end{problem}

\begin{problem}[\cpctr{}]
    Given two convex polygons $P$ and $Q$ in the plane with $n$ and $m = O (n)$ edges, respectively, is there a rigid motion (translation and rotation) of $P$ that makes it contained in $Q$?
\end{problem}

Given a set $S$ of intervals on the real line, we denote by $I (S)$ the
smallest interval of real numbers that covers the entire set $S$.

\begin{theorem}
    $\reduction{\scp}{n \log n}{\pct}$.
\end{theorem}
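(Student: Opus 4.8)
The plan is to encode the one‑dimensional \scp\ instance as a pair of ``comb''-shaped simple polygons $\mathcal{P}$ and $\mathcal{Q}$ in the plane, built so that any placement of $\mathcal{P}$ inside $\mathcal{Q}$ is forced to be essentially a pure horizontal translation, which in turn is exactly a feasible \scp\ translation. First I would normalize: dedupe the points of $P$ and merge touching intervals of $Q$ (neither changes whether some $P+v\subseteq Q$), so that $P$ consists of distinct reals and $Q$ of pairwise‑disjoint intervals with positive minimum gap $g$. Sorting $P$ and the endpoints of $Q$ — the only superlinear step, costing $O(n\log n)$ — I obtain $I(P)=[\alpha,\beta]$, $I(Q)=[\gamma,\delta]$, the gap $g$, and the minimum spacing $s$ between consecutive points of $P$, and fix small constants $0<\eta<\tfrac14\min(g,s)$ and $h=\tfrac12$.

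Now the construction. Let $\mathcal{Q}$ be the comb whose ``spine'' is the rectangle $[\gamma-\eta,\delta+\eta]\times[0,1]$ and which carries, on top of the spine, one ``room'' $[a_j-\eta,b_j+\eta]\times[1,2]$ for each interval $[a_j,b_j]\in Q$; since $\eta<g/4$ these rooms stay pairwise disjoint, so $\mathcal{Q}$ is a simple polygon. Let $\mathcal{P}$ be the comb whose spine is $[\alpha-\eta,\beta+\eta]\times[0,1]$ and which carries, on top, one thin ``prong'' $[p_i-\eta,p_i+\eta]\times[1,1+h]$ for each $p_i\in P$; since the $p_i$ are distinct and $\eta<s/4$ the prongs are disjoint, so $\mathcal{P}$ is a simple polygon. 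Both polygons have $O(n)$ vertices and are written down in $O(n)$ additional time; the reduction outputs the \pct\ instance $(\mathcal{P},\mathcal{Q})$.

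For correctness, note two facts that hold by design: a prong of width $2\eta$ fits inside the room of $[a_j,b_j]$ (which has width $(b_j-a_j)+2\eta$) exactly when the prong's center lies in $[a_j,b_j]$, and $\mathcal{P}$'s spine fits inside $\mathcal{Q}$'s spine exactly when $I(P)+v_x\subseteq I(Q)$. Thus if $P+v\subseteq Q$, translating $\mathcal{P}$ by $(v,0)$ slides its spine into $\mathcal{Q}$'s spine and each prong into the room of an interval containing the corresponding translated point, so $\mathcal{P}$ lands inside $\mathcal{Q}$. Conversely, if a translation $(v_x,v_y)$ puts $\mathcal{P}$ inside $\mathcal{Q}\subseteq\{0\le y\le 2\}$, the spine forces $v_y\ge 0$, so every prong, originally occupying heights $1\le y\le 1+h$, now reaches up to $1+h+v_y>1$ and hence has a portion strictly above the line $y=1$, where $\mathcal{Q}$ consists solely of the disjoint union of the inflated rooms; as $\eta<g/4$ the gaps between consecutive rooms exceed the prong width $2\eta$, so each connected prong must sit inside a single room, whence $p_i+v_x\in[a_j,b_j]\subseteq Q$ for all $i$, i.e.\ $P+v_x\subseteq Q$. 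Hence $(\mathcal{P},\mathcal{Q})$ is feasible if and only if the \scp\ instance is, giving $\reduction{\scp}{n\log n}{\pct}$.

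The point I expect to require the most care is the $\eta$‑inflation of the rooms. Without it, a prong of positive width could never lie on the boundary of an interval of $Q$, let alone inside a degenerate single‑point interval — yet feasible \scp\ translations can and do sit exactly there (e.g.\ those arising from the reduction of \tsump\ to \scp\ above, where the intervals may be single points and a solution forces an exact coincidence). Choosing the inflation equal to the prong half‑width makes ``prong inside room'' coincide \emph{exactly} with ``prong center inside the original closed interval.'' Checking that the small‑parameter choices are mutually consistent — rooms disjoint, prongs disjoint, prongs never poking above the line $y=2$, and the spine of $\mathcal{P}$ genuinely fitting into the spine of $\mathcal{Q}$ whenever $P+v\subseteq Q$ — is the routine but slightly fussy remainder of the argument.
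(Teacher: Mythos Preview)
Your argument is correct and uses the same comb-polygon idea as the paper. The paper's version is terser: it sets $\mathrm{COMB}(S)=S\times[0,1]\cup I(S)\times[-1,0]$ for both $A$ and $B$, so the two combs have identical height~$2$ and any containing translation is forced to have $v_y=0$ outright; you instead let $\mathcal{P}$ be shorter than $\mathcal{Q}$ and recover the horizontal constraint from the observation that each prong must protrude above $y=1$ into a single room. Your $\eta$-inflation is a genuine gain in rigor over the paper: in the paper's construction $\mathrm{COMB}(A)$ has zero-width teeth (since $A$ is a finite point set), and $\mathrm{COMB}(B)$ does too whenever $B$ contains degenerate single-point intervals, as happens in the upstream reduction from \tsump; neither is literally a simple polygon, a wrinkle the paper glosses over and you handle cleanly.
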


\begin{proof}
   Let $(A,B)$ be an \scp\ instance, such that $|A|, |B| = O (n)$.
   Assume without loss of generality that $A$ and $B$ are embedded along
   the $x$-axis. We construct a \pct\ instance to which there is a
   solution if and only if there exists a solution to the original
   \scp\ instance.

   We convert an interval set $S$ into a ``comb''
   (see \figref{fat})
   \begin{figure}
      \centering
      \begin{tabular}{c}
        \includegraphics{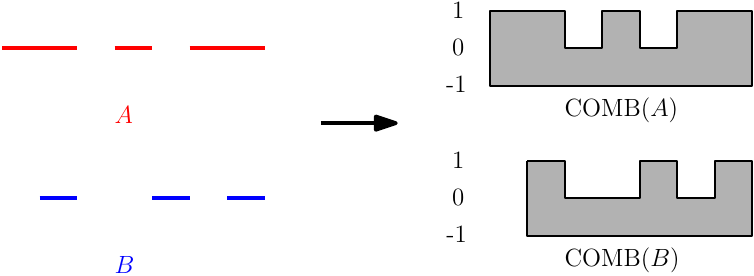}
      \end{tabular}
      \caption{Converting a segment set to a polygon}
      \figlab{fat}
   \end{figure}
   by mapping it to the simple polygon
   $\mbox{COMB}(S) = S \times [0,1] \cup I(S) \times [-1,0]$.
   The polygons $C_A = \mbox{COMB}(A)$ and $C_B = \mbox{COMB}(B)$ are
   computed in $O (n \log n)$ time by sorting $A$ and $B$ along the $x$-axis.
   Clearly, if $C_A + (u,v) \subseteq C_B$, for some $(u,v) \in \Reals^2$,
   then $v = 0$ (since the height of both $C_A$ and $C_B$ is 2).
   Moreover, it is easily seen (by definition) that
   $C_A + (u,0) \subseteq C_B$ if and only if $A + u \subseteq B$.~
\end{proof}

\begin{theorem}
    We have
    \begin{equation*}
        \reduction{\scp}{n \log n}{\cpcr}
    \end{equation*}
    and
    \begin{equation*}
       \reduction{\scp}{n \log n}{\cpctr}.
    \end{equation*}
\end{theorem}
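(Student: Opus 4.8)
The plan is to convert the rotational freedom into a translational one by wrapping the given \scp\ instance $(A,B)$ (with $A$ its set of points and $B$ its set of pairwise-disjoint intervals, following the previous proof) onto a circular arc of enormous radius $R$ centered at the origin $o$: I place the real number $x$ at polar angle $x/R$, so that the whole instance — together with every translate of $A$ that could possibly fit into $B$ — occupies a minuscule angular window, and rotating a planar object about $o$ by a small angle $\theta$ imitates, on that window, translating the line by $v=R\theta$. I would phrase correctness through support functions: writing $\mathcal{P}_\theta$ for the copy of $\mathcal{P}$ rotated about $o$ by $\theta$, for convex bodies $\mathcal{P}_\theta\subseteq\mathcal{Q}$ holds iff $h_{\mathcal{P}}(\hat u_{\alpha-\theta})\le h_{\mathcal{Q}}(\hat u_\alpha)$ for every direction $\hat u_\alpha$ of angle $\alpha$. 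So the goal is to build convex polygons $\mathcal{P}$ (from $A$) and $\mathcal{Q}$ (from $B$) whose support functions over the window encode $A$ and $B$ so that this inequality is achievable for some $\theta$ exactly when $A+v\subseteq B$ for some $v$.

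Concretely, $\mathcal{P}$ is essentially the convex hull of the points sitting at radius $M$ in the directions $x/R$, $x\in A$, so that $h_{\mathcal{P}}$ peaks at $\approx M$ precisely over the images of the points of $A$ and sags between them; $\mathcal{Q}$ is a convex polygon obtained from a large disk by shaving a shallow cap in the mid-direction of each gap between consecutive intervals of $B$ (and by faceting the disk's boundary over the intervals themselves), so that $h_{\mathcal{Q}}$ stays near its maximum over the images of the intervals of $B$ and dips strictly below it over the gaps. Both polygons have $O(n)$ vertices and are produced in $O(n\log n)$ time, dominated by sorting $A$ and $B$ to obtain their angular order. With the parameters chosen correctly (see below), a shifted peak of $h_{\mathcal{P}}$ fits under $h_{\mathcal{Q}}$ exactly when the corresponding point of $A+v$ (with $v=R\theta$) lies over an interval of $B$, so $\mathcal{P}_\theta\subseteq\mathcal{Q}$ holds iff $A+v\subseteq B$; since $v=R\theta$ sweeps all relevant translations as $\theta$ sweeps its window, this yields $\reduction{\scp}{n\log n}{\cpcr}$.

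For \cpctr\ I keep the same $\mathcal{Q}$ but make $\mathcal{P}$ slightly smaller than $\mathcal{Q}$ and essentially as long, so that any rigid motion placing $\mathcal{P}$ inside $\mathcal{Q}$ must have rotational part a tiny angle and translational part of small norm; because the angular window is so small, such a motion shifts all of $h_{\mathcal{P}}$'s peaks by one common angle and perturbs each by one common small amount. Choosing the size gap between $\mathcal{P}$ and $\mathcal{Q}$ small enough relative to the separation $\varepsilon_0$ that the instance enjoys — for \scp\ instances arising from \tsump\ a purported solution is either exact or wrong by at least a fixed $\varepsilon_0$, and one may first thicken the zero-length point-features of $B$ into intervals of width $<\varepsilon_0$ without changing the answer — makes this perturbation unable to create a spurious solution, so the rigid motion exists iff the \scp\ instance is solvable, giving $\reduction{\scp}{n\log n}{\cpctr}$.

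The main obstacle is the interplay between convexity and the wrapping. Convexity forces $h_{\mathcal{Q}}$'s dents (and $h_{\mathcal{P}}$'s sags) to be only quadratically deep in the angular width of the feature producing them, and the instance carries features of very different scales (the two long ``garbage'' intervals versus the short point-features and narrow gaps), so the radius $R$ (hence the angular scaling), the disk radius $M$, the shave depths, the faceting, and — for \cpctr\ — the size gap between $\mathcal{P}$ and $\mathcal{Q}$ all have to be pinned down together, and the reduction is faithful only for instances that possess a fixed separation $\varepsilon_0$ (which the \tsump-derived ones do). The second genuinely delicate point, specific to \cpctr, is that a rigid motion has two surplus degrees of freedom beyond what \scp\ provides; the ``slightly smaller $\mathcal{P}$'' device together with the separation $\varepsilon_0$ is exactly what keeps that translational slack from being exploitable.
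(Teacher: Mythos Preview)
Your high-level plan---wrap the line onto a circular arc so that translations become rotations about $o$, then build convex polygons whose radial profile encodes $A$ and $B$---is the same idea the paper uses. But your construction diverges from the paper's at one point that matters for correctness, and as written it has a gap.

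The paper's polygons are not clusters of points near the arc: it sets $f(x)=(\sin(x/100),\cos(x/100))$ on the \emph{unit} circle and then takes
\[
   W(\SS)=\CH\Bigl(\{o\}\cup\bigcup_{\I\in\SS}\{l(\I),r(\I),p(\I)\}\Bigr),
\]
i.e., it \emph{adjoins the origin} to both convex hulls. Thus $W(\A)$ and $W(\B)$ are long, thin wedges with a sharp tip at $o$. This single device does two jobs at once. For \cpcr, rotating $W(\A)$ about $o$ just pivots the wedge, and any rotation outside the tiny admissible window visibly pushes the wedge out of $W(\B)$, so there is nothing to argue about ``outside the window''. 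For \cpctr, the long thin wedge with a distinguished tip at $o$ forces every rigid placement of $W(\A)$ inside $W(\B)$ to send $o$ either near $o$ or near the far end; the paper rules out the $180^\circ$ case by a width count and then slides the remaining placement continuously to a pure rotation about $o$, with no appeal to any separation constant.

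Your $\mathcal{P}$ and $\mathcal{Q}$, by contrast, do not contain $o$. If $\mathcal{Q}$ is really ``obtained from a large disk'' centered at $o$ with caps shaved only over the gaps of $B$, then rotating $\mathcal{P}$ about $o$ by a large angle carries all of its support-function peaks into the un-shaved part of $\mathcal{Q}$ and gives $\mathcal{P}_\theta\subseteq\mathcal{Q}$ regardless of whether the \scp\ instance is solvable---a spurious positive. If instead you intend $\mathcal{Q}$ to be only the convex hull of the $O(n)$ vertices you placed inside the window (so not disk-like at all), you must say so and then check that the sags of $h_{\mathcal{Q}}$ and $h_{\mathcal{P}}$ outside the window interact correctly; the sketch does neither. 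Either way, the behaviour outside the window is precisely what has to be controlled, and the proposal leaves it unspecified.

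For \cpctr\ your route is also genuinely different: you absorb the translational slack by shrinking $\mathcal{P}$ and invoking an instance-dependent separation $\varepsilon_0$ (available for \tsump-derived instances). That can be made to work, but it is weaker than the paper's reduction, which handles arbitrary \scp\ instances without any separation hypothesis, and it still presupposes that the \cpcr\ construction is already correct. The simplest repair to your whole sketch is the paper's: add $o$ to both convex hulls.
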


\begin{proof}
   Let $(A',B')$ be an \scp\ instance such that $|A'|, |B'| = O (n)$.
   Assume without loss of generality that $A', B' \subseteq [0.45, 0.55]$.
   Let $A = \brc{0.1} \cup A' \cup \brc{0.9}$ and
   $B = [0,0.2] \cup B' \cup [0.8, 1]$.
   Clearly, $A + v \subseteq B$ if and only if
   $A'+v \subseteq B'$, for any $v \in \Reals$.

   Let $\C$ be the unit circle centered at $o$,
   $f (x) = (\sin (x / 100), \cos (x / 100) )$ be a mapping
   of the real line into $\C$, and $\A = f (A), \B = f (B)$.
   Obviously, there exists a translation $v \in \Reals$ such that
   $A + v \subseteq B$ if and only if there is a rotation of $\A$ around
   $o$ that makes $\A$ contained in $\B$.

   Given a circular interval (an arc) $\I \subset \C$, we denote by
   $p (\I)$ the intersection point between the two lines tangent to $\C$ at
   $l (\I)$ and $r (\I)$, the two endpoints of $\I$.
   (In case $\I$ is a point, $p (\I) = \I$.)
   For a set $\SS$ of arcs on $\C$, we define
   $W (\SS) = \CH (\brc{(0,0)} \cup
                   \bigcup_{\I \in \SS} \brc{l (\I), r (\I), p (\I)})$,
   where $\CH (\cdot)$ is the convex-hull operator.
   See \figref{circular:cpoly} for an illustration.
   \begin{figure}
      \centering
      \includegraphics{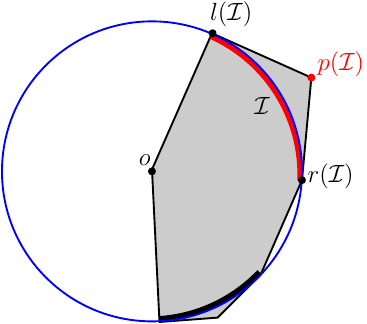}
      \caption{Constructing a convex polygon from a set of circular arcs.}
      \figlab{circular:cpoly}
   \end{figure}
   In fact, $\A$ and $\B$ occupy a very small portion of $\C$ due to the
   definition of $f$ that maps any segment in $A$ or $B$ to a very small
   arc on $\C$.
   Again, we compute $W (\A)$ and $W (\B)$ in $O (n \log n)$
   time by sorting $\A$ and $\B$ along $\C$.

   Given a solution to the original \scp\ instance, it immediately
   induces solutions to the corresponding \cpcr\ and \cpctr\
   problems: if $A + v \subseteq B$ (for some $v \in \Reals$), let $R_v$
   be the corresponding rotation around $o$ such that
   $R_v (\A) \subseteq (\B)$.
   By definition, $R_v (W (\A)) \subseteq W (\B)$.
   Similarly, a solution to the \cpcr\ instance induces a solution
   to the \scp\ instance.
   We thus have $\reduction{\scp}{n \log n}{\cpcr}$.

   We need more effort to show that a solution to the \cpctr\
   instance induces a solution to the \scp\ instance.
   Let, then, $(W (\A), W (\B))$ be a \cpctr\ instance to which
   there is a solution. Note that $W (\A)$ and $W (\B)$ are long skinny
   polygons (of length more than 1 and width less than 0.01\footnote{
      For ease of exposition we measure the ``width'' of the polygon in
      terms of the length of its defining arc.
   }), where one
   ``end'' of both polygons is the point $o$. Therefore the transformation
   that makes $W (\A)$ contained in $W (\B)$ maps $o$ either to a point
   near the origin (with a small or no rotation at all) or to a point near
   the other end of $W (\B)$, that is, $W (\A)$ is rotated around $o$ by
   roughly $180^\circ$ and translated by roughly $(0,1)$.

   Let us first rule out the latter case. The wider end of $W (\A)$ (made
   of the points of $\A$ on $\C$) is of width $(0.9-0.1)/100=0.008$, whereas
   the width of $W (\B)$ is $(1-0)/100=0.01$. A simple calculation shows
   that in order to have the wider end of $W (\A)$ contained in the narrower
   end of $W (\B)$, it should be translated upward (along the $y$-axis,
   after the rotation around the origin) by at least 1.2, thereby making
   $W (\A)$ ``penetrate'' the wider end of $W (\B)$ and disallowing this
   transformation.

   Consider, then, the former case that maps $o$ to a point
   close to it. The two extreme sides of $W (\A)$ (the segments
   $\seg{o \, f(0.1)}$ and $\seg{o \, f(0.9)}$)
   of length 1 are separated from the interior of $W (\A)$
   (lying between $f (\min A)$ and $f (\max A)$) by two areas of length less
   than 1. Therefore the transformation that makes $W (\A)$ contained in
   $W (\B)$ must map the two extreme sides of $W (\A)$ to lie inside
   $W ([f(0),f(0.2)])$ and $W ([f(0.8),f(1)])$. The transformed $W (\A)$
   can now be continuously and rigidly moved further to make the origin
   invariant under the compound transformation, while the image of
   $W (\A)$ is still contained in $W (\B)$. However, since now $o$ is
   invariant, the compound transformation is nothing but a
   rotation. Recall that we have already shown above that
   there exists a solution to the \cpcr\
   instance if and only if there exists a solution to the
   \scp\ instance. Therefore $\reduction{\scp}{n \log n}{\cpctr}$.
\end{proof}

The \cpcr\ problem can be solved in $O (n^2 \log n)$ time by performing an
angular sweeping of $P$ around the origin. During the rotation we maintain
the intersection points between $Q$ and the rotated $P$.
The events of this algorithm are rotation angles for which an intersection
point between $P$ and $Q$ identifies with some vertex of the polygons.
The goal is to track an event (if it exists) in which the rotated copy
of $P$ becomes fully contained in $Q$.

\section{Hausdorff Distance between Segment Sets}
\seclab{hausdorff:distance}

In this section we show that computing the minimum Hausdorff distance
under translation between two planar sets of line segments is \tshard.
Note that we allow segments to be ``infinite'', that is, lines.

\begin{definition}
   Let $A$ and $B$ be two compact subsets of $\Reals^2$.
   The {\em one-sided Hausdorff distance} from $A$ to $B$ is defined as
   $$
      \delta'(A,B) = \max_{a \in A} \min_{b \in B} ||a - b||
   $$
   (where $||\cdot||$ is the Euclidean distance),
   and the {\em two-sided Hausdorff distance} (``Hausdorff distance''
   for short) between $A$ and $B$ is defined as
   $$
      \delta(A,B) = \max (\delta'(A,B), \delta'(B,A)).
   $$

   The {\em minimum one-sided Hausdorff distance under translation}
   between $A$ and $B$ is
   $$
      d_H'(A,B) = \min_{T \in \Reals^2} \delta'(T(A),B).
   $$
   Similarly, the {\em minimum Hausdorff distance under translation}
   between $A$ and $B$ is
   $$
      d_H(A,B) = \min_{T \in \Reals^2} \delta(T(A),B).
   $$
   The translation $T$ that realizes the minimum distance is called the
   optimal translational matching between $A$ and $B$.
\end{definition}

\begin{problem}[\osshdt{}: 1-Sided Segment Hausdorff Distance under Translation]
    Given \linebreak two sets $A, B$, each containing $n$ segments in $\Reals^2$, compute $d_H'(A,B)$.
\end{problem}

The \osshdt\ problem is obviously \tshard\ since it is
a generalization of \scp: Let $(P,Q)$ be an \scp\ instance;
then, by definition, there exists a number $u \in \Reals$ for which
$P + u \subseteq Q$ if and only if $d_H'(P,Q) = 0$.

\begin{problem}[\shdt\ (Segment Hausdorff Distance under Translation)]
    Given two sets $A, B$, each containing $n$ segments in $\Reals^2$, compute $d_H(A,B)$.
\end{problem}

\begin{theorem}
    $\reduction{\scp}{n}{\shdt}$.
\end{theorem}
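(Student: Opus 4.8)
The plan is to reduce \scp\ to \shdt\ by taking an \scp\ instance $(P, Q)$ — with $P$ a set of $n$ points and $Q$ a set of $O(n)$ pairwise-disjoint intervals, all embedded along the $x$-axis — and producing two segment sets $A$ and $B$ in $\Reals^2$ together with a threshold value $\tau > 0$ such that $d_H(A, B) < \tau$ if and only if there is a translation $v$ with $P + v \subseteq Q$. The one-sided problem \osshdt\ is already known (from the remark preceding the statement) to be \tshard\ because $d_H'(P, Q) = 0$ exactly when a covering translation exists; but for the two-sided distance we must also control $\delta'(B, A)$, i.e.\ we must force the translate of $A$ to stay ``close'' to all of $B$, not just inside it. This is the ``solution-separation'' idea alluded to in the introduction: we need to engineer $B$ so that the only translations achieving small $\delta'(B,A)$ are precisely those confined to a small window, and within that window $\delta'(A,B)$ is small exactly for the covering translations.

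**Construction.** First I would put $P$ and $Q$ on a horizontal line $y = 0$ inside a bounded $x$-range (using the affine normalization, as in the earlier proofs), and let $A_0$ be $P$ viewed as $n$ degenerate (zero-length) or tiny vertical segments, and $B_0$ be the intervals $Q$ viewed as horizontal segments on $y=0$. To pin down the feasible translations I would add, to both $A$ and $B$, a pair of long ``guard'' objects far above and far below — for instance long horizontal segments (or genuine lines, which the problem explicitly permits) at heights $y = +H$ and $y = -H$ for a large constant $H$. Including a horizontal line at height $H$ in both $A$ and $B$ means that any translation moving $A$'s line away from $B$'s line by more than $\tau$ in the vertical direction is immediately penalized through $\delta'(B,A)$ (and $\delta'(A,B)$), so the vertical component of every good translation is forced to be $\approx 0$. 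Similarly, short vertical ``stoppers'' at the two extreme $x$-coordinates of $I(Q)$, placed in $B$, force the horizontal component of the translation to lie in the narrow interval of feasible shifts — any $A$-point pushed past them would have $\delta'(A,B) \ge \tau$. With the translation thus confined to (essentially) a horizontal shift $v$ within the feasible window, $\delta'(T(A), B)$ reduces to the $1$-sided segment-covering quantity, which is $0$ iff $P+v \subseteq Q$, and $\delta'(B, A)$ is controlled by making $B$'s horizontal pieces short enough (or by duplicating enough of $A$'s geometry inside each interval) that it never exceeds $\tau$ regardless of $v$ in the window. One then picks $\tau$ small enough that ``$< \tau$'' certifies an exact covering and ``$\ge \tau$'' certifies none exists; the whole construction is $O(n)$ after an $O(n)$-time pass (the intervals in an \scp\ instance coming from our earlier reduction are already essentially sorted, but even a sort would only cost $O(n \log n)$, and here the claim is the sharper $O(n)$ bound, so I would avoid sorting and rely on the structure of $Q$, or simply note $Q$ can be given sorted).

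**The main obstacle.** The delicate point is the two-sidedness: I must guarantee that $\delta'(B, A)$ stays below $\tau$ for \emph{every} admissible translation, not just the covering ones, while still keeping $\tau$ small enough to distinguish exact containment from near-containment. If $B$ contains long horizontal intervals and $A$ contains only a few points, then $\delta'(B, A)$ — the farthest a point of some interval can be from the nearest point of $A$ — could be large (half an interval length) and would swamp the signal. The fix is to ensure that the portion of $B$ that is \emph{not} a guard object is covered, up to distance $< \tau$, by $A$ under any feasible shift; concretely, one replaces each interval of $Q$ not by a single segment but by a suitably dense set, or one adds to $A$ a ``backbone'' horizontal segment spanning $I(Q)$ so that $\delta'(B,A)$ is automatically tiny once the vertical offset is controlled, and then re-examines whether this backbone damages the ``$P + v \subseteq Q$'' equivalence on the $A$-to-$B$ side (it does not, if the backbone lies within the span of $Q$ and $Q$'s intervals are disjoint — one has to check that the backbone of $A$ is itself contained in $\bigcup Q$, which forces a mild condition on the gaps, handled by also adding the two long \scp-style flanking intervals $[-M, \ldots]$ and $[\ldots, M]$ from the earlier reduction). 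Verifying this balance — choosing $H$, $\tau$, the stopper lengths, and the backbone so that all four inequalities ($\delta'(A,B)$ and $\delta'(B,A)$, in the YES and NO cases) come out on the correct side of $\tau$ simultaneously — is the real content of the proof; everything else is bookkeeping.
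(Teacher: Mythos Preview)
Your proposal has the right overall shape --- add horizontal guard lines to both sets to pin down the vertical component of the translation, then argue that the residual horizontal problem is exactly \scp\ --- but it is missing the one idea that actually makes the reduction go through in $O(n)$ time: the \emph{separation lemma}. You write ``one then picks $\tau$ small enough that `$<\tau$' certifies an exact covering and `$\geq\tau$' certifies none exists,'' but you never say how to compute such a $\tau$ without solving the problem. The paper's key step (its \lemref{1shd-res}) is that, because the \scp\ instance comes from \tsump\ via rational operations, all coordinates are rationals with denominators at most some $M$ (readable off in $O(n)$ time), and hence either $d_H'(P,Q)=0$ or $d_H'(P,Q)\geq \eps = 1/(2M^4)$. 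This quantitative gap is what lets you set the threshold \emph{before} knowing the answer; without it your ``pick $\tau$ small enough'' is circular.

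The construction in the paper is also quite different from yours and avoids the backbone headache entirely. Rather than putting guard lines at the \emph{same} heights in $A$ and $B$ and then scrambling to control $\delta'(B,A)$ with a backbone segment (which, as you noticed, threatens to break the $A$-to-$B$ direction over the gaps of $Q$), the paper puts the lines at \emph{slightly different} heights: $\ell_2,\ell_3$ at $y=\pm 0.8\eps$ go into $A$, and $\ell_1,\ell_4$ at $y=\pm 1.6\eps$ go into $B$. Then every point of $Q$ (on $y=0$) is automatically within $0.8\eps$ of $\ell_2\subset A$, every point of $\ell_1,\ell_4$ is within $0.8\eps$ of $\ell_2,\ell_3$, and vice versa --- so $\delta'(B,A)\leq 0.8\eps$ for free, with no backbone and no stoppers. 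The asymmetry of the line heights also forces the vertical shift to be at most $0.3\eps$ (else $\ell_4$ is stranded), after which the points of $P$ can only be matched to $Q$, and the separation lemma finishes the argument. Your same-height guards plus backbone can perhaps be made to work, but the offset-line trick is what makes the proof clean; and in any case the separation bound on $\eps$ is indispensable.
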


\begin{proof}
   We begin with a reduction of an \scp\ instance to an instance of \osshdt,
   and show how to extend the latter instance into an instance of \shdt.
   First we establish the following:

   \begin{lemma}
       \lemlab{1shd-res}%
       Let $(P,Q)$ be an \scp\ instance of size $n$. Then, either $d_H'(P,Q) = 0$ or $d_H'(P,Q) \geq \eps = \eps(P,Q)$, where $\eps > 0$ can be computed in only $O (n)$ time (without actually solving the corresponding \osshdt\ instance).
   \end{lemma}

   \begin{proof}
       First we note that the input to a \tsump\ problem is a set of integers, and that the reduction from \tsump\ to \scp\ (see \secref{interval:containment}) involves only rational operators. Therefore we may assume that the description of the \scp\ problem (numbers in $P$ and interval endpoints in $Q$) consists of only rational numbers, all of which belong to the interval $[0,1]$. Denote the largest denominator of all these numbers by $M$.  This number can be computed in $O (n)$ time.

      We set a lower bound on $d_H'(P,Q)$ in terms of $M$,
      under the assumption that $d_H'(P,Q) > 0$.
      Let $u$ be the translation that realizes $d_H'(P,Q)$.
      Clearly, there are two points $p_1, p_2 \in P$ and two respective
      points $q_1, q_2 \in Q$ (segment endpoints), for which
      $|(p_1+u)-q_1| = |(p_2+u)-q_2| = d_H'(P,Q)$.  Moreover,
      $p_1$ and $p_2$ are ``interlocked'' with $q_1$ and $q_2$,
      in the sense that modifying $u$ slightly (by increasing
      or decreasing it) results in enlarging either
      $|(p_1+u)-q_1|$ or $|(p_2+u)-q_2|$.  (If no such four
      numbers existed, we would be able to improve
      (decrease) the 1-sided Hausdorff distance under translation
      between $P$ and $Q$ by modifying slightly the value of $u$.)
      \figref{points:config}
      \begin{figure}
         \centering
         \includegraphics{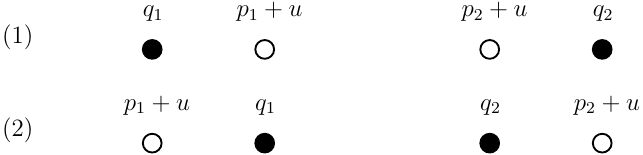}
         \caption{Two configurations that realize the
                  minimum Hausdorff distance under translation.}
         \figlab{points:config}
      \end{figure}
      shows the two possible configurations of $p_1, p_2$ (shown as empty
      circles), and $q_1, q_2$ (shown as filled circles).
      Without loss of generality we assume the upper configuration.
      Thus, $d_H'(P,Q) = (p_1 + u) - q_1 = q_2 - (p_2 + u)$, hence
      $u = (q_1 + q_2 - p_1 - p_2) / 2$ and
      $d_H'(P,Q) = (p_1 - p_2 - q_1 + q_2) / 2$.
      Therefore either $d_H'(P,Q) = 0$ or
      $d_H'(P,Q) \geq \eps(P,Q) = 1/(2 M^4)$.
   \end{proof}

   We use this separation property of the one-sided Hausdorff distance
   for extending the \osshdt\ instance into a (two-sided distance)
   \shdt\ instance.

   We define the four lines $\ell_1$:~$y = 1.6 \eps$, $\ell_2$:~$y = 0.8 \eps$, $\ell_3$:~$y = -0.8 \eps$, and $\ell_4$:~$y = -1.6 \eps$, and construct an \shdt\ instance $(A,B)$, where $A = P \cup \ell_2 \cup \ell_3$ and $B = Q \cup \ell_1 \cup \ell_4$ (see \figref{configuration}).
   \begin{figure}
      \centering
      \includegraphics{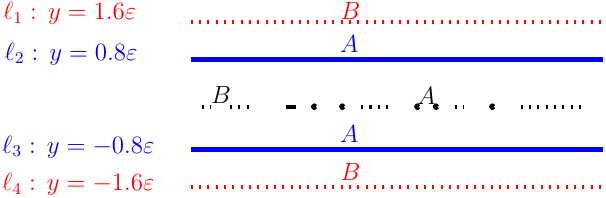}
      \caption{The construction that shows that \shdt\ is \tshard.}
      \figlab{configuration}
   \end{figure}
   In order to complete the proof
   we need to show that there exists a number $u \in \Reals$ such that
   $P + u \subseteq Q$ if and only if $d_H(A,B) < \eps$.

   First assume that there exists a number $u \in \Reals$ that solves the
   \scp\ instance. It follows immediately that $d_H(A,B) < \eps$.
   Indeed, each point in $P + u$ is contained in $Q$, each point in $Q$ is
   within distance $0.8 \eps$ from $\ell_2$ (or $\ell_3$) which is in $A$,
   and each point in $\ell_1$ or $\ell_4$ (which are in $B$) is within
   distance $0.8 \eps$ from some point in $\ell_2$ or $\ell_3$ (which are
   in $A$), and vice versa.

   Now assume that $d_H(A,B) < \eps$. Since the problem is symmetric about the $x$ axis (around 0), we may assume without loss of generality that $v$, the $y$ component of the translation, is non-negative. We show that $v \leq 0.3 \eps$.  Assume to the contrary that $v > 0.3 \eps$.  Then no point of $\ell_4$ (which is in $B$) has a counterpart point of $A$ within distance less than $1.1 \eps$, which is a contradiction. Hence $v \leq 0.3 \eps$. This means that the closest point of every point of $P$ belongs to $Q$ (since the distance to $\ell_1$ or $\ell_4$ is at least $1.3 \eps$).  Therefore $d_H(P,Q) < \eps$, and in particular, $d_H'(P,Q) < \eps$.  Thus, according to \lemref{1shd-res}, $d_H'(P,Q) = 0$, which, in turn, implies that there exists a number $u \in \Reals$ such that $P + u \subseteq Q$, establishing the claim.
\end{proof}

The fastest known algorithm for computing the minimum Hausdorff distance under translation between two point sets $A$ and $B$, where $|A| = |B| = n$, is due to Huttenlocher \etal \cite{hks-uevsi-93}; its running time is $O (n^3 \log n \, \alpha (n^2))$.  Computing the optimal translational matching between two planar segment sets, each of cardinality $n$, can be done in $O (n^4 \log^3 n)$ time by the parametric-search algorithm of Agarwal \etal \cite{ast-apsgo-94}.  The interested reader is referred to the survey of Alt \etal \cite{ag-rgo-99} for further information on geometric matching.

\section{Conclusions}
\seclab{concl}

In this paper we show that several containment problems (of segment sets
and polygons), and computing the minimum Hausdorff distance under
translation between two planar sets of segments, are \tshard.
We conclude by mentioning a few open problems:

\begin{itemize}
    \item Is the special case of \scp, in which both sets contain only points, \tshard\ too?\footnote{ This version of {\scp{}} was introduced to the authors by S.R. Kosaraju.  }

    \item Are other variants of the minimum Hausdorff distance problem \tshard?

    \item Given a set $S$ of $n$ points in the plane, consider the decision problem of determining whether all the $\binom{n}{2}$ distances induced by $S$ are distinct. Is this problem \tshard?
\end{itemize}

\paragraph*{Acknowledgments.}
The authors thank Jeff Erickson and Alon Efrat for helpful discussions concerning the problems studied in this paper and related problems.

\printbibliography

\end{document}